\newcommand{\bfp}{\mathbf{p}}
\newcommand{\bfh}{\mathbf{h}}
\newcommand{\bfx}{\mathbf{x}}
\newcommand{\bfr}{\mathbf{r}}
\newcommand{\prob}[2][]{\text{\bf Pr}\ifthenelse{\not\equal{}{#1}}{_{#1}}{}\!\left[#2\right]}
\newcommand{\expect}[2][]{\text{\bf E}\ifthenelse{\not\equal{}{#1}}{_{#1}}{}\!\left[#2\right]}
   \def\11{\mathbf{1}}
\def\bR{\mathbb{N}^*} \def\bN{\mathbb{Z}^*_{\ge 0}}
 \def\br{\mathbf{r}} \def\00{\mathbf{0}}
 \def\bN{\mathbb{N}} \def\calU{\mathcal{U}}
\newtheorem{theorem}{Theorem}
\newtheorem{example}{Example}
\newtheorem{corollary}[theorem]{Corollary}
\begin{document}

\title{An Axiomatic Approach to Block Rewards}

\author{Xi Chen\\ Columbia University\\ \texttt{xichen@cs.columbia.edu}
\and
Christos Papadimitriou\\ Columbia University\\ \texttt{christos@cs.columbia.edu} \and
Tim Roughgarden\\ Columbia University\\ 
\texttt{tr@cs.columbia.edu}
}

\maketitle
\begin{abstract}
  Proof-of-work blockchains reward each miner for one completed block
  by an amount that is, in expectation, proportional to the number of
  hashes the miner contributed to the mining of the block.  Is
  this {\em proportional allocation rule} optimal?  And in what sense?
  And what other rules are possible?  In particular, what are the
  desirable properties that any ``good'' allocation rule should
  satisfy?  To answer these questions, we embark on an axiomatic
  theory of incentives in proof-of-work blockchains at the time scale
  of a single block.  We consider desirable properties of allocation
  rules including: symmetry; budget balance (weak or strong);
  sybil-proofness; and various grades of collusion-proofness.  We show
  that Bitcoin's proportional allocation rule is the unique allocation
  rule satisfying a certain system of properties, but this does not
  hold for slightly weaker sets of properties, or when the miners are
  not risk-neutral.  We also point out that a rich class of allocation
  rules can be approximately implemented in a proof-of-work
  blockchain.
\end{abstract}
\section{Introduction}\label{s:intro}

The Bitcoin protocol was a remarkable feat: eleven years after its
sudden appearance \cite{Satoshi}, and without much adjustment and
debugging, it has been used by millions of people and has launched the
blockchain industry. Arguably, the most crucial and ingenious aspect
of its design lies in the {\em incentives} the protocol provides to its 
miners to participate and follow it faithfully.  
We believe it is of great importance and interest to understand and
scrutinize the incentives provided by blockchain protocols---and to
do so through the point of view and the methodology of Economic
Theory, the science of incentives.

Flaws in the incentives of a blockchain protocol can manifest
themselves at multiple timescales.  For longest-chain proof-of-work
blockchains like Bitcoin, the most well-studied incentive-based
attacks, such as selfish mining~\cite{EG14,SSZ16,KiayiasEC16} and transaction
sniping~\cite{CKWN16}, concern miners reasoning strategically over
multiple block creation epochs.  For example, in selfish mining, a
miner relinquishes revenue in the short term to achieve greater
revenue (in expectation) in the long run via a type of forking attack.

This paper studies incentive issues and potential deviations from
intended miner behavior at the most basic time scale, that of a single
block creation epoch.  We focus on the allocation of {\em block
  rewards}, which
drives the incentive structure in Bitcoin and many other similar
protocols.  The dominant paradigm in proof-of-work blockchains is to
fix a per-block reward, and for each block to allocate the entire
reward to whichever miner first solves a difficult cryptopuzzle.
Assuming that miners independently and randomly guess and check
possible solutions to the cryptopuzzle, the expected reward earned by
a miner is proportional to their share of the total contributed
computational power.

The proportional reward allocation scheme is a
simple, natural, and compelling idea, and in all evidence it works
quite well.  
{\em But is there any sense in which it is ``optimal''?}\ To answer,
one has to start by considering the whole spectrum of options; next
one must articulate appropriate desiderata; and finally, characterize
the full extent of possible solutions that satisfy these desiderata.
This is in line with the axiomatic methodology, which has been
traditionally employed in Economic Theory for the development of
utility theory~\cite{vNM}, of impossibility results in social
choice~\cite{Arrow}, as well as of cooperative game
theory~\cite{Shapley}, to name three salient examples.  The advantage
of the axiomatic approach is that through it one understands not only
the domain of possibilities, but also the costs of transgressing the
boundaries of this domain.  This is our focus in this paper.

\subsection{The Proportional Allocation Rule and Its Alternatives}

We formalize the question above through the concept of {\em
  allocation rules}, functions that map profiles of contributed
hashing power to profiles of expected block rewards.
The input to such a function is an $n$-tuple $\bfh =(h_1,h_2,\ldots,h_n)$
of positive integers, where $h_i$ is the hash rate contributed by
miner~$i$ in a given block creation epoch, and~$n$ is the number of
distinct miners (i.e., distinct public keys) that contribute a
non-zero hash rate.  When the epoch ends and a new block is
authorized, one unit of reward becomes available,
and the question
is, {\em how should it be allocated to the miners?}
The output of an allocation rule specifies an answer to this question,
in the form of an $n$-tuple $\bfp=(p_1,p_2,\ldots,p_n)$, where~$p_i$ is
the expected reward to miner~$i$.
As mentioned above,
the allocation rule corresponding to the Bitcoin protocol
is the {\em proportional rule}, with
\[
p_i = \frac{h_i}{\sum_{j=1}^n h_j}.
\]

Is the proportional rule ``optimal''?  Or, an even more basic
question: What alternatives to the proportional allocation rule are
possible?  At first blush, it might seem as though a proof-of-work
blockchain has no choice but to implement the proportional allocation 
rule, as presumably the probability distribution over which miner is
first to produce a cryptopuzzle solution will be proportional to the
contributed computational power.
However, as we discuss in
Section~\ref{ss:impl}, a rich class of allocation rules can in
principle be implemented (at least approximately) within a
proof-of-work blockchain.  The key idea is to wait for a large number
of solutions to a medium-difficulty cryptopuzzle, rather than a single
solution to a high-difficulty puzzle.  Each solution acts as a
single sample from the distribution proportional to miner hash rates,
and with a large enough number of samples all miners' hash rates can
be estimated with high accuracy.  
This, in turn, permits the
(approximate) implementation of a wide range of allocation rules, even
though the true miner hash rates are not a priori known to the
protocol.

\subsection{Properties of Allocation Rules}\label{ss:props}

The definition of an allocation rule is generic enough; the question
is, what kinds of properties should such allocation rules satisfy?  We
consider an array of possible properties, which a blockchain designer
may (or may not) require of an allocation rule.  The
first several properties are motivated by economic viability and fairness
rather than incentives per se.
\begin{itemize}  

\item {\em Non-negativity.}
Expected rewards (the $p_i$'s) should be nonnegative.  That is,
  the protocol cannot require payments from miners.

\item {\em Budget-balance.}
The protocol cannot be ``in the red,'' meaning
the sum of expected block rewards cannot exceed the unit of
  block reward available.  {\em Strong} budget-balance insists that
the entire unit of block reward is allocated, while {\em weak}
budget-balance allows the protocol to withhold some of the block
reward from miners.

\item {\em Symmetry.} The allocation rule should not depend on the
  names of the miners (i.e., their public keys), only on their
  contributed hash rates.

\end{itemize}

Finally, there are two further properties aimed at disincentivizing
certain behaviors by the miners that may be considered undesirable by
the blockchain designer:
\begin{itemize}

\item {\em Sybil-proofness.}
No miner can possibly benefit by creating many accounts and splitting
its mining power among them.

\item {\em Collusion-proofness.}
Two or more
  miners cannot benefit by pooling their mining resources and somehow
  splitting the proceeds.  This property has different variants
  depending on what types of payments between colluding miners are
  permitted; see Section~\ref{ss:axioms}.
\end{itemize}
The proportional allocation rule satisfies all five properties
(presumably by design), including the strong version of
budget-balance.

\subsection{Our Results}

We prove a number of characterization results that identify
which allocation rules satisfy which sets of desired properties.
We begin with
risk-neutral miners, who care only about expected rewards (and no
other details of the reward distribution).
Our first result is:
\begin{enumerate}

\item The proportional allocation rule is the unique allocation rule
  that, with risk-neutral miners, satisfies non-negativity, strong
  budget-balance, symmetry, sybil-proofness, and a weak form of
  collusion-proofness (Theorem~\ref{theo1}).

\end{enumerate}
That Bitcoin's block reward scheme is a singularly good idea hardly
comes as a surprise.  Nevertheless, we believe there is value in
formally articulating what makes it unique.  
Further, if one relaxes the requirements slightly, additional
allocation rules become possible.  For example:
\begin{enumerate}

\item [(2)] A family of allocation rules that we call the {\em generalized
    proportional allocation rules} constitute the only allocation
  rules that, with risk-neutral miners, satisfy non-negativity, weak budget-balance, symmetry,
  sybil-proofness, and a slightly stronger form of collusion-proofness
  (Theorem~\ref{theo2}).

\end{enumerate}

But is it reasonable to assume that miners are risk-neutral?  The
phenomenon of {\em mining pools} for Bitcoin and other
cryptocurrencies (see e.g.~\cite{L+15}) is a behavior which,
intuitively, aims to reduce the risk of each miner, and thus suggests
that miners in the real world are {\em risk-averse.}\footnote{There
  may also be other reasons to join a mining pool, for example to
  avoid the cost of maintaining a full node, but risk-aversion is
undoubtedly a first-order factor.  (Who is willing to wait an expected
twenty years for their first reward?)} 
We prove an {\em
  impossibility result} that makes this intuition precise: 
\begin{enumerate}

\item [(3)] If miners
are risk averse (equivalently, their utility is the expectation of a
strictly concave function of the reward), then there is {\em no}
non-zero allocation rule that is symmetric, (weakly) budget-balanced,
sybil-proof, and (weakly) collusion-proof (Theorem~\ref{theo3}).

\end{enumerate}
{This result suggests that mining pools as a form of collusion are unavoidable.}  In contrast:
\begin{enumerate}

\item [(4)] If miners are {\em
  risk-seeking}, then Bitcoin's 
proportional allocation rule satisfies (strong versions) of 
all of the desired properties
(Theorem~\ref{theo7}).  

\item [(5)]
A deterministic implementation
of the proportional rule---with the block reward split fractionally
between miners---satisfies all of the desired properties even with
risk-averse miners (Corollary~\ref{cor:poss}).  

\end{enumerate}
The deterministic implementation in~(5) can be viewed as a simulation
of the functionality of a mining pool inside the blockchain protocol
itself, analogous to the ``Revelation Principle'' from mechanism
design theory.\footnote{FruitChain~\cite{FruitChain} can be likewise
  interpreted as a lower-variance version of the proportional
  allocation rule, with ``fruits'' playing the role of
  medium-difficulty puzzle solutions.  Bobtail~\cite{Bobtail} gives
  still another implementation of the proportional allocation rule
  using multiple puzzle solutions; their primary motivation was to
  reduce the variance in time between consecutive blocks, rather than
  that of miners' rewards per se.}

\section{Model}\label{s:model}

\subsection{Allocation Rules}\label{ss:rules}

Let $\bR=\cup_{n\ge 1} \bN^n$ denote the set of all finite tuples of
positive integers.  For a positive integer $n$, we write $[n]$ to
denote $\{1,2,\ldots,n\}$.  
An \emph{allocation rule} is a function $\bfx$ over $\bR$ 
that maps each tuple $\bfh=(h_1,\ldots,h_n) \in \bN^n$ of length $n$ to 
an $n$-tuple $\bfp=(p_1,\ldots,p_n)$ of nonnegative real
numbers.
Here $h_i$ and $p_i$ are the hash rate and expected reward of
miner~$i$.\footnote{For example, $h_i$ could be in units of hashes per
  second.  For convenience, we assume this is an integer.  All of our
  results for integral hash rates immediately imply the same results
for arbitrary positive rational hash rates. 
(The proofs hold verbatim for hash rates that are integral multiples
of~$1/m$ for some $m \in \bN$, and hence apply to all finite rational
tuples of hash rates.)}
We also write $x_i(\bfh)$ for $p_i$ and $\bfx(\bfh)$ for
$\bfp$.
We refer to a tuple $\bfh \in \bR$ as a \emph{configuration} 
and $\bfx(\bfh)$ as the corresponding \emph{allocation}.

In this section and the next, we assume that miners care only about
their expected rewards (with more being better), and in particular are
risk-neutral.  Section~\ref{s:risk} addresses the case of
non-risk-neutral miners and the interesting new issues that they
raise.\footnote{With risk-neutral miners, there is no need to specify
details of the reward distribution beyond the expected reward for
each miner.  For example, $x_i(\bfh)$ might represent a deterministic 
reward of~$x_i(\bfh)$ to miner~$i$, or that miner~$i$ has
a~$x_i(\bfh)$ probability of winning the entire block reward (and
with the remaining probability receives no reward).
We'll be more specific about the semantics of an allocation rule in
due time, when
we consider non-risk-neutral miners in Section~\ref{s:risk}.}

\subsection{Axioms}\label{ss:axioms}

With the language and notation of allocation rules, we can translate the
properties in Section~\ref{ss:props} into formal
axioms.\footnote{Non-negativity is already baked into our definition
  of an allocation rule.}
See Section~\ref{ss:exs} for examples of allocation rules that satisfy
different subsets of these properties.\vspace{0.1cm}
\begin{enumerate}
\item [A1.]
\textbf{Symmetry}: An allocation rule $\bfx$ is \emph{symmetric} if  $\bfx(\pi(\bfh))=\pi(\bfx(\bfh))$
   for every configuration $\bfh\in \bR$ and 
 every permutation $\pi$.\vspace{0.1cm}
\end{enumerate}
That is, the expected reward of a miner 
does not depend on how the miners are ordered (or their public keys).
An example of an asymmetric rule is a {\em dictator} rule, which
always allocates the block reward to the miner with the (say)
lexicographically smallest public key.  We consider only symmetric
rules in this paper.\vspace{0.1cm}

\begin{enumerate}

\item [A2a.] \textbf{Strong budget-balance}: An allocation rule $\bfx$
  is \emph{strongly budget-balanced} if $\sum_{i} x_i(\bfh) = 1$
for every configuration $\bfh$.

\item [A2b.] \textbf{Weak budget-balance}: An allocation rule $\bfx$
  is \emph{weakly budget-balanced} if $\sum_{i} x_i(\bfh) \le 1$
for every configuration $\bfh$.

\item [A3.] \textbf{Sybil-proofness}:
An allocation rule $\bfx$ is {\em sybil-proof} if: For every
configuration $\bfh \in \bR$ and every configuration $\bfh'$ that can
be derived from~$\bfh$ by replacing a miner with hash rate $h_i$ by
a set~$S$ of miners with total hash rate at most $h_i$ (i.e., with
$\sum_{j \in S} h'_j \le h_i$), the total expected reward to miners
of~$S$ under $\bfh'$ is at most that of miner~$i$ in the original
configuration:
\[
\sum_{j \in S} x_j(\bfh') \le x_i(\bfh).\vspace{0.0cm}
\]

\end{enumerate}

We consider several natural definitions of collusion-proofness,
depending on the type of reward sharing allowed inside the
coalition and on whether a Pareto improvement for a coalition is
required to be strict.  We consider both arbitrary revenue-sharing
agreements and proportional sharing.  The latter corresponds to 
the reward schemes used in many Bitcoin mining pools (see
e.g.~\cite{L+15}).\vspace{0.1cm}
\begin{enumerate}

\item [A4a.] \textbf{Collusion-proofness (under arbitrary reward sharing)}:
An allocation rule $\bfx$ is {\em collusion-proof (under arbitrary
reward sharing)} if: For every
configuration $\bfh \in \bR$ and every configuration $\bfh'$ that can
be derived from~$\bfh$ by replacing a set~$T$ of miners with a new
miner~$i^*$ (representing the coalition) with hash rate at most the
total hash rate of miners in~$T$
(i.e., with $h'_{i^*} \le \sum_{j \in T} h_j$), 
the total expected
reward to miner~$i^*$ under $\bfh'$ is at most 
the total expected reward of miners of~$T$ in the original
configuration:
\[
x_{i^*}(\bfh') \le \sum_{j \in T} x_j(\bfh).
\]

\item[A4b.] \textbf{Strong collusion-proofness (under proportional
    sharing)}:
An allocation rule $\bfx$ is {\em strongly collusion-proof (under proportional
  sharing)} if: For every
configuration $\bfh \in \bR$ and every configuration $\bfh'$ that can
be derived from~$\bfh$ by replacing a set~$T$ of miners with a new
miner~$i^*$ with hash rate at most the total hash rate of miners in~$T$,
either: (i) no miner
of~$T$ has strictly higher expected reward in~$\bfh'$ (with
proportional sharing) than in~$\bfh$; or (ii)
some miner of~$T$ has strictly lower expected reward in~$\bfh'$
(with proportional sharing) than in~$\bfh$.  That is, if
\[
x_{i^*}(\bfh') \cdot \frac{h_i}{\sum_{j \in T} h_j} >  x_i(\bfh)
\]
for some miner~$i \in T$, then 
\[
x_{i^*}(\bfh') \cdot \frac{h_{\ell}}{\sum_{j \in T} h_j} <  x_{\ell}(\bfh)
\]
for some other miner~$\ell \in T$.

\item[A4c.] \textbf{Weak collusion-proofness (under proportional
    sharing)}: An allocation rule $\bfx$ is {\em weakly collusion-proof
    (under proportional sharing)} if: For every configuration
  $\bfh \in \bR$ and every configuration $\bfh'$ that can be derived
  from~$\bfh$ by replacing a set~$T$ of miners with a new miner~$i^*$
  with hash rate at most the total hash rate of miners in~$T$, some
  miner of~$T$ has expected reward under $\bfh'$ (with proportional sharing)
  at most that in the original configuration.  That is, for some
  miner~$i \in T$,
\[
x_{i^*}(\bfh') \cdot \frac{h_{i}}{\sum_{j \in T} h_j} \le  x_{i}(\bfh).
\]

\end{enumerate}
Every violation of~(4c) also constitutes a violation of~(4b), and
similarly for~(4b) and~(4a).  That is, (4a)--(4c) are ordered from
strongest (i.e., most difficult to satisfy) to weakest.
Impossibility and uniqueness results are most compelling for the
weakest variants of budget-balance and collusion-proofness;
possibility results are most impressive for the strongest variants.
The proportional allocation rule satisfies the strongest versions of
budget-balance and collusion-proofness (in addition to symmetry and sybil-proofness).

\subsection{Examples}\label{ss:exs}

This section gives several examples of allocation rules which satisfy
different subsets of the axioms in Section~\ref{ss:axioms}.

\begin{example}[Proportional allocation rule]\label{ex:1}
The proportional allocation rule is 
defined for each
configuration~$\bfh$ of length~$n$ by
\[
x_i(\bfh) = \frac{h_i}{\sum_{j \in [n]} h_j}.
\]
This allocation rule is symmetric, 
strongly budget-balanced,
sybil-proof, and collusion-proof under arbitrary reward sharing.
\end{example} 

\begin{example}[All-zero allocation rule]\label{ex:2}
The all-zero allocation rule is defined for each
configuration~$\bfh$ of length~$n$ by
\[
x_i(\bfh) = 0
\]
for~$i=1,2,\ldots,n$.
The all-zero allocation rule is symmetric, 
weakly budget-balanced, sybil-proof, 
and collusion-proof under arbitrary reward sharing.
\end{example} 

Our next example comprises scaled versions of the proportional
allocation rule, with the scaling constant dependent on the total hash
rate.
\begin{example}[Generalized proportional allocation rules]\label{ex:genprop}
For every nondecreasing function
$c:\mathbb{N}\rightarrow [0,1]$,
the corresponding generalized proportional allocation rule is
defined for each configuration~$\bfh$ of length~$n$ by
\[
x_i(\bfh) = c\left(\sum_{j \in [n]} h_j \right) \cdot \frac{h_i}{\sum_{j \in
  [n]} h_j}.
\]
Examples \ref{ex:1} and \ref{ex:2} are the generalized proportional
allocation rules corresponding to the functions $c(y)=1$ and $c(y)=0$,
respectively.

Generalized proportional allocation rules are symmetric,
weakly budget-balanced, sybil-proof, and
collusion-proof under arbitrary reward sharing.
To check sybil-proofness, let $\bfh$ be a configuration and $\bfh'$
derived from~$\bfh$ by replacing some miner~$i$ with hash rate $h_i$ by
a set~$S$ of miners with total hash rate at most $h_i$ (i.e., with
$\sum_{j \in S} h'_j \le h_i$).
Then, the total expected reward under $\bfh'$ of the miners in~$S$ is:
\begin{align*}
c \left(\sum_{j\in [n]}  h_j-h_i+\sum_{j\in S} h'_j\right)&\cdot 
\frac{\sum_{j\in S} h'_j}{\sum_{j\in [n]} h_j-h_i+\sum_{j\in S} h'_j}\\[0.4ex]
&\hspace{-3.5cm}=c \left(\sum_{j\in [n]} h_j-h_i+\sum_{j\in S} h'_j\right)  
\left(1-\frac{\sum_{j\in [n]}h_j-h_i}{\sum_{j\in [n]} h_j-h_i+\sum_{j\in S} h'_j}\right)\\[0.8ex]
&\hspace{-3.5cm}\le c \left(\sum_{j\in [n]} h_j \right)  
\left(1-\frac{\sum_{j\in [n]}h_j-h_i}{\sum_{j\in [n]} h_j}
  \right)\\[0.8ex]
         &\hspace{-3.5cm}
=c \left(\sum_{j\in [n]} h_j \right)\cdot \frac{h_i}{\sum_{j\in [n]} h_j},
\end{align*}
with the inequality following from the fact that
$\sum_{j\in S} h'_j \le h_i$.
Since the final expression is miner~$i$'s expected reward in the
original configuration, this verifies sybil-proofness.
Collusion-proofness can be checked using a similar argument.
\end{example} 

\begin{example}[Proportional-to-squares allocation rule]
The propor\-tional-to-squares allocation rule is 
defined for each
configuration~$\bfh$ of length~$n$ by
\[
x_i(\bfh) = \frac{h^2_i}{\sum_{j=1}^n h^2_j}.
\]
This allocation rule is symmetric, strongly budget-balanced, and
sybil-proof.  It is not even weakly collusion-proof under proportional
sharing.
\end{example} 

\begin{example}[Proportional-to-square-roots allocation rule]
The proportional-to-square-roots allocation rule is 
defined for each configuration~$\bfh$ of length~$n$ by
\[
x_i(\bfh) = \frac{\sqrt{h_i}}{\sum_{j=1}^n \sqrt{h_j}}.
\]
This allocation rule is symmetric, strongly budget-balanced, and
collusion-proof under arbitrary reward sharing.
However, it is not sybil-proof.
\end{example}

\subsection{Implementing Non-Proportional Rules}\label{ss:impl}

The proportional allocation rule is realizable, in that there is a
proof-of-work blockchain protocol (namely, Bitcoin) that implements
it.  Are non-proportional allocation rules purely hypothetical?  
This section demonstrates that, at least in principle,
{\em   every} weakly budget-balanced and continuous
allocation rule can be approximately implemented within a
proof-of-work blockchain.  

Fix a power-of-2~$M$, a cryptographic hash
function~$f$, and a difficulty level~$b$.  
A {\em full solution} is a preimage~$z$ such that $f(z)$ has at least
$b$ trailing zeroes, and a {\em partial solution} is a~$z$ such that
$f(z)$ has at least $b - \log_2 M$ trailing zeros.
The parameter~$b$ is chosen so that random guessing by miners 
produces a full solution in a
prescribed amount of time (on average), such as 10 minutes.  
One expects partial solutions to be discovered at~$M$ times the rate of
full solutions.

Fix a weakly budget-balanced allocation rule~$\bfx$.
A high-level description of one possible corresponding protocol is then:\vspace{0.1cm}
\begin{enumerate}

\item Miners attempt to find partial and full solutions of the form
  $\langle pkey | \sigma | nonce \rangle$, where $pkey$ is the miner's
  public key, $\sigma$ is derived from the current blockchain state
  (e.g., the hash of the block of transactions at the end of the
  longest chain), and $nonce$ is a number of free bits specified by
  the miner.

\item A miner who discovers a partial (non-full) solution can add
  it to the blockchain.

\item A miner who discovers a full solution can authorize a new block
  of transactions and add it to the blockchain (along with their full
  solution).

\item When a new full solution and corresponding block are published,
  block rewards are distributed according to the number of partial
  solutions contributed by each miner since the preceding full
  solution.  
Precisely, let $[n]$ denote the miners contributing at least one
partial solution, $g_i$ the number contributed by miner~$i$, and~$M' =
\sum_{i \in [n]} g_i$ the total number of partial solutions reported in
  this epoch.  Define an estimate~$\smash{\hat{h}_i}$ of miner~$i$'s hash rate
  by
\[
\hat{h}_i = g_i \cdot \rho \cdot \frac{M}{M'},
\] 
where~$\rho$ denotes the hash rate that would produce (on average) one
partial solution in the prescribed amount of time.
Define miner rewards by
  $\smash{\bfx(\hat{h}_1,\hat{h}_2, \ldots,\hat{h}_n)}$.\vspace{0.1cm}

\end{enumerate}
Several comments are in order.  First, this protocol effectively simulates the
typical functionality of a mining pool {\em inside the protocol
  itself}.  This is reminiscent of the {\em Revelation Principle} from
mechanism design (see e.g.~\cite{AGTbook}), which is a simulation
argument that shows how to eliminate non-truthful reporting of
preferences by moving the deviations inside the mechanism itself.

Second, if we take~$M=1$, then partial and full solutions coincide,
$M=M'=1$, and the protocol essentially recovers Bitcoin's
implementation of the proportional allocation rule.

Third, the larger we take~$M$, the more accurate the estimated hash
rates $\smash{\hat{\bfh}}$.  Thus any continuous allocation rule can be
approximated as closely as desired by taking~$M$ sufficiently large.

Fourth, we ignore incentive issues involving miners' delaying the
publication of full or partial solutions; Schrivjers et
al.~\cite{fc16} discuss (in the context of mining pools) methods for
mitigating such incentive issues.

Finally, we have deliberately avoided committing to the details of the
implementation, such as the best way to record the partial solutions
on-chain. 
Our point is simply that there appears to be no fundamental
barrier to implementing a wide range of non-proportional allocation
rules in a proof-of-work blockchain.

\section{Characterizations with Risk-Neutral Miners}\label{s:unique}

\subsection{A Uniqueness Result for the Proportional Allocation
  Rule}\label{ss:unique}

We assume throughout this section that miners are risk-neutral and are
concerned only with their expected rewards.
Our first main result is a uniqueness result for the proportional
allocation rule: It is the only rule that is symmetric, strongly
budget-balanced, sybil-proof, and collusion-proof (even weakly
collusion-proof with proportional sharing).

\begin{theorem}[Characterization of Rules with A1, A2a, A3, A4c]\label{theo1}
  The proportional allocation rule is the unique allocation rule that
  is symmetric, strongly budget-balanced, sybil-proof, and weakly
  collusion-proof (with proportional sharing).
\end{theorem}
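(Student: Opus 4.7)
The forward direction (the proportional rule satisfies A1, A2a, A3, A4c) is established by Example~\ref{ex:1}, so the work is in the uniqueness direction. My plan is to argue by induction on the nonnegative integer $H - n$, where $\bfh = (h_1, \ldots, h_n)$ is the configuration under consideration and $H = \sum_j h_j$ is its total hash rate. The intuitive picture is that the ``atomic'' configuration consisting of $H$ miners each of rate $1$ is completely pinned down by symmetry and strong budget-balance, and every other configuration can be related to it by splitting and regrouping a single miner.

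The base case $H - n = 0$ forces every $h_i = 1$, so symmetry (A1) says all miners get the same allocation and strong budget-balance (A2a) then forces that common value to be $1/H = h_i/H$. For the inductive step, assume $H - n \geq 1$; then some miner~$i$ has $h_i \geq 2$. Let $\bfh^{(i)}$ be obtained from $\bfh$ by splitting miner~$i$ into $h_i$ miners of rate~$1$. The new configuration has length $n - 1 + h_i$ and total rate $H$, so the quantity $H - n$ strictly decreases by $h_i - 1 \geq 1$, and the inductive hypothesis applies: every miner of $\bfh^{(i)}$ gets an allocation equal to its hash rate divided by~$H$. In particular, the $h_i$ newly created atoms each get exactly $1/H$.

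Now I squeeze $x_i(\bfh)$ from two sides. Applying sybil-proofness (A3) to the split $\bfh \to \bfh^{(i)}$ gives
\[
\frac{h_i}{H} \;=\; \sum_{j \in S} x_j(\bfh^{(i)}) \;\le\; x_i(\bfh),
\]
where $S$ is the set of new atoms. Applying weak collusion-proofness (A4c) in the opposite direction, grouping the set $S$ back into the single miner~$i$ in $\bfh$, yields the existence of some $j \in S$ with
\[
\frac{x_i(\bfh)}{h_i} \;\le\; x_j(\bfh^{(i)}) \;=\; \frac{1}{H},
\]
so $x_i(\bfh) \le h_i/H$. Combining the two inequalities gives $x_i(\bfh) = h_i/H$. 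Repeating this argument for each miner with $h_j \geq 2$ (by splitting that miner instead) pins down $x_j(\bfh) = h_j/H$ for all such~$j$. Finally, for atomic miners in $\bfh$ (those with $h_j = 1$, if any), symmetry makes their allocations all equal to some common $\alpha$, and strong budget-balance then forces $\alpha = 1/H$, completing the induction.

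The delicate point, and the one I would want to highlight, is that A4c is only a guarantee about \emph{some} miner in the coalition rather than all of them. This is exactly why the induction relies on coalitions consisting entirely of atoms of equal rate: symmetry (A1) applied inside $\bfh^{(i)}$ forces all atoms in $S$ to share the common value $1/H$, so A4c's ``some miner'' bound is automatically a bound on every atom. Without this homogeneity of the coalition the weak axiom A4c would not suffice; the strength of the argument comes from choosing the comparison configuration $\bfh^{(i)}$ so that symmetry upgrades A4c to its effective strong form.
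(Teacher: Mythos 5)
Your proposal is correct and follows essentially the same route as the paper's proof: split a miner with rate at least $2$ into unit-rate atoms, use sybil-proofness for the lower bound $x_i(\bfh)\ge h_i/H$ and weak collusion-proofness (upgraded to a bound on every atom via symmetry of the homogeneous coalition) for the upper bound, then finish the unit-rate miners with symmetry and strong budget-balance. The only difference is the induction measure ($H-n$ versus the paper's count of entries exceeding $1$), which is immaterial since both strictly decrease under the splitting step.
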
 

\begin{proof} 
Let $\bfx$ be an allocation rule that is symmetric, 
strongly budget-balanced,
sybil-proof, and
weakly collusion-proof (with proportional sharing).
We prove by induction that for every configuration $\bfh \in \bN^*$,
  $\bfx$ satisfies $x_i(\bfh) = h_i/\sum_{j} h_j$. 
The induction is on the number $t$ of entries in $\bfh$ that are
larger than $1$.  The base case of $t=0$ is trivial since $\bfh$ is then
an all-$1$ tuple (of length~$n$, say) and symmetry and strong
budget-balance imply that $x_i(\bfh) = 1/n$ for every~$i$.
  
For the inductive step, assume that the statement holds for all
tuples with less than $t$ entries larger than $1$. Now consider a
configuration 
$\bfh \in \mathbb{N}^n$ for some $n\ge t$ that sums to $m$ and has $t$
entries larger than $1$.  
Because $\bfx$ is symmetric, we can assume
without loss of generality that $h_1,h_2,\ldots,h_t>1$ while
$h_{t+1},h_{t+2},\ldots,h_{n}=1$. 
Let  $\bfp=(p_1,\ldots,p_n)=\bfx(\bfh)$. 
We claim that
$p_i= {h_i}/{m}$ for each $i\in [t]$.
This claim, together with the assumptions that $\bfx$ is symmetric and
strongly budget-balanced, implies that $x_i(\bfh)=1/m$ for all $i>t$ and
hence $\bfx$ indeed agrees with the proportional rule on $\bfh$.

Suppose $p_i> {h_i}/{m}$ for some $i \in [t]$.
Then, consider the configuration $\bfh'$ obtained by splitting 
the miner $i$ into $h_i$ sybils, each with hash rate $1$.
By the inductive hypothesis and the fact that $\bfh'$ has $t-1$
entries that are larger than $1$, 
$\bfx$ is proportional on $\bfh'$ and so each of these miners with
hash rate $1$ receives $1/m$ which is strictly less than
${p_i}/{h_i}$.  This means that, in $\bfh'$,
these $h_i$ players with hash rate $1$ would all be better off by
colluding and sharing their results proportionally (i.e., uniformly).
This contradicts the assumption that $\bfx$ is weakly collusion-proof
(with proportional sharing).  We conclude that $p_i \le
{h_i}/{m}$ for all $i \in [t]$.

On the other hand, suppose $p_i<{h_i}/{m}$ for some $i \in [t]$,
and consider the configuration $\bfh'$ obtained by splitting miner~$i$
into~$h_i$ sybils with hash rate 1 each.  By the inductive hypothesis,
each sybil receives expected reward ${1}/{m}$ under~$\bfx$
in~$\bfh'$.  The total expected reward earned by the sybils therefore
exceeds that of miner~$i$ in~$\bfh$.  This contradicts the assumption
that $\bfx$ is sybil-proof, so we can conclude that
$p_i \ge {h_i}/{m}$ (and hence~$p_i = {h_i}/{m}$) for all
$i \in [t]$.  This completes the proof of the claim, the inductive
step, and the theorem.
\end{proof}

\subsection{Weak Budget-Balance and Generalized Proportional Rules}\label{ss:generalized}

Example~\ref{ex:genprop} shows that relaxing the strong budget-balance
requirement to weak budget-balance enlarges the design space.  One
might suspect that, analogous to Theorem~\ref{theo1}, generalized
proportional allocation rules are the
only ones that satisfy symmetry, weak budget-balance, sybil-proofness,
and weak collusion-proofness with proportional sharing.  The next example
shows that this is not the case.

\begin{example}\label{ex:half}
Consider the following allocation rule $\bfx$.
For a configuration~$\bfh$ in which no miner has more than half the
overall hash rate (i.e., $\smash{h_i \le \tfrac{1}{2} \sum_{j \in [n]} h_j}$
for every $i$), $\bfx$ agrees with the proportional allocation rule.
For a configuration in which one miner~$i$ has more than half of the
overall hash rate, miner~$i$ receives its fair share under the proportional rule (i.e., $x_i(\bfh)
= h_i/\sum_{j \in [n]} h_j$), while other miners receive~0 ($x_j(\bfh)
= 0$ for $j \neq i$).

The rule $\bfx$ is symmetric, weakly budget-balanced, 
sybil-proof, and weakly collusion-proof under proportional sharing.
It is not strongly collusion-proof under proportional sharing, however.
\end{example}

Our second main result shows that if weak collusion-proofness is
strengthened to strong collusion-proofness (with proportional
sharing), then generalized proportional allocation
rules are indeed the only ones that satisfy the axioms~A1, A2b, A3,
and A4b.

\begin{theorem}[Characterization of Rules with A1, A2b, A3, A4b]\label{theo2}
Generalized proportional allocation rules  are the only
allocation rules that are symmetric,
weakly budget-balanced, sybil-proof, and
strongly collusion-proof under proportional sharing.
\end{theorem}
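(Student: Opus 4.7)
The plan is to mimic the inductive template of Theorem~\ref{theo1}, with two crucial modifications: the scalar $c(m)$ that parametrizes a generalized proportional rule must first be extracted from the data, and the treatment of miners of hash rate~$1$ can no longer be closed off using strong budget-balance. I first define $c(m)$ to be $m$ times the common reward (by symmetry) of each miner in the all-$1$s configuration of length $m$; weak budget-balance and non-negativity then force $c(m)\in[0,1]$. The goal is to prove, by induction on the number $t$ of entries of $\bfh$ that exceed $1$, that $x_i(\bfh) = c(m)\,h_i/m$ for every $\bfh\in\bR$ with total~$m$ and every $i$. The base case $t=0$ is immediate from the definition of $c$.

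For the inductive step, write $\bfh = (h_1,\ldots,h_t,1,\ldots,1)$ with $h_1,\ldots,h_t>1$ and total $m$. For any ``large'' miner $i\in[t]$, I split miner~$i$ into $h_i$ sybils of hash rate $1$; the resulting configuration $\bfh'$ has only $t-1$ entries greater than $1$, so by the inductive hypothesis each sybil earns $c(m)/m$. Sybil-proofness applied to $\bfh\to\bfh'$ gives $x_i(\bfh)\ge h_i\,c(m)/m$, and strong collusion-proofness applied to the reverse merger $\bfh'\to\bfh$ (all $h_i$ sybils are symmetric, so none can strictly improve without some other strictly worsening---impossible among identical sybils) gives $x_i(\bfh)\le h_i\,c(m)/m$. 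Hence $x_i(\bfh) = h_i\,c(m)/m$ for every $i\in[t]$.

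The main obstacle is the (by symmetry common) reward $q$ of the hash-rate-$1$ miners, since weak budget-balance does not recover it by subtraction. My idea is to merge miner~$1$ with one of the hash-rate-$1$ miners into a single miner of hash rate $h_1+1$, yielding a configuration $\bfh'''$. Although $\bfh'''$ still has $t$ entries greater than~$1$, the preceding large-miner argument applied inside $\bfh'''$---further splitting its merged miner into $h_1+1$ symmetric sybils produces a configuration with $t-1$ such entries, invoking the inductive hypothesis---forces its merged miner's reward to equal $(h_1+1)c(m)/m$. With this in hand, strong collusion-proofness applied to $\bfh\to\bfh'''$ gives $q\ge c(m)/m$: a hypothetical $c(m)/m > q$ would demand a strict loss for some coalition member, but miner~$1$'s coalition share is exactly $p_1 = h_1\,c(m)/m$ (no loss), and the hash-rate-$1$ miner's share $c(m)/m$ would be strictly above $q$ by hypothesis (also no loss), a contradiction. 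In the reverse direction, sybil-proofness applied to $\bfh'''\to\bfh$ (splitting the merged miner into an $h_1$-piece and a $1$-piece) yields $p_1+q\le (h_1+1)c(m)/m$, so $q\le c(m)/m$. Combining forces $q=c(m)/m$ and closes the induction.

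Finally, that $c\colon\bN\to[0,1]$ is nondecreasing follows from a single application of sybil-proofness to the one-miner configuration $(m)$: splitting its sole miner into one sybil of hash rate $m-1$ gives $c(m-1) = x_1((m-1))\le x_1((m)) = c(m)$. Hence $\bfx$ coincides with the generalized proportional allocation rule associated to~$c$.
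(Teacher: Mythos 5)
Your proof is correct, and it follows the paper's overall skeleton: define $c(m)$ from the all-ones configuration, induct on the number $t$ of entries exceeding $1$, and pin down the large miners by splitting each into unit sybils and invoking sybil-proofness (lower bound) and collusion-proofness among identical sybils (upper bound). Where you genuinely diverge is at the crux of the theorem---the unit-hash-rate miners, which is exactly the step the paper flags (via Example~\ref{ex:half}) as requiring strong collusion-proofness. The paper handles it globally: it first establishes that the single-miner configuration $\bfh^*$ with hash rate $m$ earns exactly $c(m)$, then gets $q \le c(m)/m$ by having that sole miner split into all of $\bfh$ (sybil-proofness), and $q \ge c(m)/m$ by having the grand coalition of $\bfh$ merge into $\bfh^*$ (strong collusion-proofness: the large miners are exactly indifferent, so no small miner may strictly gain). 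You instead argue locally: merge one large miner with one unit miner into a miner of hash rate $h_1+1$, determine that merged miner's reward to be $(h_1+1)c(m)/m$ by a further split into unit sybils (which drops to $t-1$ large entries, so the inductive hypothesis applies), and then extract $q \ge c(m)/m$ from strong collusion-proofness of this two-member coalition (miner~$1$ is indifferent, so the unit miner cannot strictly gain) and $q \le c(m)/m$ from sybil-proofness of the reverse split into an $h_1$-piece and a $1$-piece. Both arguments are valid and use the axioms in the same essential way (strong collusion-proofness with one indifferent coalition member forcing the lower bound on $q$); the paper's grand-coalition version is slightly more economical since it reuses $\bfh^*$, which it needs anyway, whereas your pairwise version is more self-contained within the inductive step and makes the minimal coalition responsible for the contradiction explicit. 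Your monotonicity argument for $c$ matches the paper's.
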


\begin{proof} 
  Let $\bfx$ be an allocation rule that is symmetric, weakly
  budget-balanced, sybil-proof, and strongly collusion-proof under
  proportional sharing.  We claim that,
for each positive integer~$m$,
  there is a nonnegative real number
  $c(m) \le 1$ such that
\[
x_i(\bfh) = c(m) \cdot \frac{h_i}{m}
\]
for every configuration $\bfh$ with total hash rate ($\sum_j h_j$)
equal to~$m$.

To prove the claim,
fix a positive integer $m$.
First, let $1^m\in \mathbb{N}^m$ denote the all-$1$ tuple of length $m$ and
define $$c(m) = \sum_{i \in [m]} x_i(1^m).$$
Since $\bfh$ is symmetric and weakly budget-balanced,
$c(m)\le 1$ and $x_i(1^m) = {c(m)}/{m}$ for every~$i \in [m]$.
Second,
let $\bfh^*$ denote the configuration with a single miner with hash
rate $m$.  Since $\bfx$ is sybil-proof and strongly collusion-proof
(under proportional sharing), the expected reward assigned by $\bfx$
to the sole miner in~$\bfh^*$ is~$c(m)$.

We follow the approach
in the proof of Theorem \ref{theo1} and prove by induction that, for every
configuration $\bfh\in \bN^*$ with $\sum_i h_i = m$,
\[
x_i(\bfh) = c(m) \cdot \frac{h_i}{m}
\]
for every miner~$i$.
The induction is again on the number $t$ of entries in $\bfh$ that are 
larger than $1$.  The base case of $t=0$ is trivial by the choice of
$c(m)$.
  
For the inductive step,
we consider a configuration $\bfh \in \mathbb{N}^n$ for some $n\ge t$
with $\sum_{i \in [n]} h_i = m$ and exactly $t$ entries larger than $1$.
Without loss of generality (since $\bfx$ is symmetric), assume that $h_1,\ldots,h_t>1$ and $h_{t+1},\ldots,h_{n}=1$. 
Let $\bfp=(p_1,\ldots,p_n)=\bfx(\bfh)$.  With respect to the first~$t$
miners, the same argument as in the proof of Theorem~\ref{theo1}
shows that
$p_i= c(m) \cdot ({h_i}/{m})$ for every $i\in [t]$.
(This part of the argument requires only weak collusion-proofness
under proportional rewards.)
Example~\ref{ex:half} shows that the rest of the proof (for miners~$t+1,t+2,\ldots,n$) must differ from
that of Theorem~\ref{theo1} and make use of strong
collusion-proofness.  The issue is that
because $\bfx$ is only assumed to be weakly budget-balanced,
it does not follow directly from $p_i= c(m)\cdot ({h_i}/{m})$ for
  all $i \in [t]$ that $p_i={c(m)}/{m}$ for all $i>t$.
  
So assume for contradiction that $p_i\ne {c(m)}/{m}$ for some
$i>t$ (and hence, by symmetry, for all such~$i$).
If $p_i>{c(m)}/{m}$, then $\bfx$ fails sybil-proofness:
the sole miner in configuration~$\bfh^*$ can increase its expected
reward by splitting into $n$ sybils with hash rates as in $\bfh$.
On the other hand, if $p_i< {c(m)}/{m}$, then
$\bfx$ fails strong collusion-proofness under proportional sharing:
if the $n$ miners in $\bfh$ form the grand coalition, then with
proportional sharing, every miner $i\in [t]$ receives the same
expected reward $p_i=c(m)\cdot ({h_i}/{m})$ as in $\bfh$
while the expected reward of every miner $i>t$ 
strictly increases (from~$p_i$ to ${c(m)}/{m}$).
We conclude that $p_i = {c(m)}/{m}$ for every $i > t$ and
hence $p_i = c(m) \cdot ({h_i}/{m})$ for every miner~$i$.  This
concludes the proof of the claim.

All that remains is to show that~$c(m)$ must be a nondecreasing function
of~$m$.  This follows from sybil-proofness: 
if $c(m+1)<c(m)$ for some $m$, then in the single-miner configuration
with hash rate $m+1$, the miner would have an incentive to replace
itself with a miner with hash rate~$m$.
\end{proof}

\section{Beyond Risk Neutrality:  Possibility and Impossibility Results}\label{s:risk}

We have been assuming so far that miners are {\em risk-neutral} and
care only about the {\em expectation} of their reward, as opposed to
other distributional characteristics (like variance).  Going beyond
this assumption reveals the interesting ways in which risk affects
incentives in blockchain protocols.

\subsection{Von Neumann-Morgenstern Utilities}\label{ss:vnm}

The earliest, and most principled, treatment of risk in economics is
through von Neumann and Morgenstern's {\em utility theory,}
articulated more than seven decades ago \cite{vNM}.  One starts from
each agent having a very general set of arbitrary preferences between
{\em lotteries} (finite-support probabilistic distributions over
different amounts of money), where the preferences of agents are
assumed to satisfy four very natural and plausible axioms:
completeness, transitivity, continuity, and independence. The
remarkable result proved is that any such system of preferences of an
agent is tantamount to the agent possessing {\em a utility function
  $U$,} a function mapping amounts of money to the reals, such that
the agent prefers lottery $L$ to lottery $M$ if and only if
$\expect[L]{U(p)} \geq \expect[M]{U(p)}$, where the expectation is
over the random variable~$p$ (which is in units of money).  That is,
any agent with preferences satisfying the properties above can be
modeled as an {\em expected utility-maximizer}.

For an agent with utility function~$U$, we can interpret $U(p)$ 
as the amount of {\em utility} the agent receives from a reward of $p$.
For ease of presentation, we assume throughout that $U(0)=0$ and
that $U$ is twice-differentiable and strictly increasing.
The risk sensitivity of an agent can then be gauged by the second
derivative of $U$.
Risk-neutrality corresponds to a linear utility function, with~0
second derivative; in this case, $\expect{U(p)} = U(\expect{p})$ for
every distribution over rewards~$p$.
A {\em risk-averse} agent is an expected utility maximizer with a
strictly concave utility function---a function with an everywhere
negative second derivative.  For a risk-averse agent, $\expect{U(p)}
\le U(\expect{p})$ for every distribution over~$p$ (by Jensen's
inequality).
For example, a risk-averse agent may not risk flipping a fair coin if
the two outcomes are either losing $30\%$ of their fortune, or
doubling it.  (Whereas a risk neutral --- or {\em risk-seeking}, see
Section~\ref{ss:seek} --- agent would be happy to flip the coin.)

\subsection{Collusion-Proofness Revisited}\label{ss:axioms2}

In this section and the next, we consider (randomized) allocation
rules that allocate the entire reward to (at most) one miner.  We
interpret an output $x_i(\bfh)$ of an allocation rule
as the probability that miner~$i$
receives the entire reward in the configuration~$\bfh$.

Of the four types of axioms in Section~\ref{ss:axioms}, symmetry and
budget-balance are obviously independent of any model of miner
utility.  The sybil-proofness axiom requires only cosmetic changes:\vspace{0.15cm}
\begin{enumerate}

\item [A3.] \textbf{Sybil-proofness (with general utility functions)}:
  An allocation rule $\bfx$ is {\em sybil-proof} if:
For every allowable miner utility function~$U$,
every
configuration $\bfh \in \bR$, and every configuration $\bfh'$ that can
be derived from~$\bfh$ by replacing a miner with hash rate $h_i$ by
a set~$S$ of miners with total hash rate at most $h_i$ (i.e., with
$\sum_{j \in S} h'_j \le h_i$), the total expected utility to~$i$ in
$\bfh'$ is at most that in the original configuration:
\[
\sum_{j \in S} x_j(\bfh')\cdot U(1) \le  x_i(\bfh)\cdot U(1).
\]
\end{enumerate}
Assuming that every allowable utility function~$U$ is strictly
increasing (and thus, $U(1)>U(0)=0$), this axiom is equivalent to the definition of
sybil-proofness for risk-neutral miners given in Section~\ref{ss:axioms}.

We need to adjust the definition of
collusion-proofness, since now the incentives of miners
are affected by their utility functions. 
Fix a class $\calU$
of allowable utility functions --- for example, all linear functions, or all
strictly concave functions.
We next
define analogs of axioms~(4a) and~(4c) from Section~\ref{ss:axioms}
(we skip~(4b) because we do not need it in our statements).
A {\em reward-sharing scheme} with miner set~$S$ specifies how the
miners of~$S$ would split a block reward internally; formally, it is
a collection of nonnegative random variables $\{ \bfr_i \}_{i \in S}$
that satisfy $\sum_{i \in S} \bfr_i \le 1$ with probability~1.\vspace{0.15cm}

\begin{enumerate}

\item [A4a.] \textbf{Collusion-proofness against $\calU$ (under arbitrary reward
    sharing)}: An allocation rule $\bfx$ is {\em collusion-proof
    against $\calU$ (under arbitrary reward sharing)} if:
  For every configuration
  $\bfh \in \bR$, every choice of utility function $U_i \in \calU$ for
  each participating miner~$i$, every configuration $\bfh'$ that can
  be derived from~$\bfh$ by replacing a set~$T$ of miners with a new
  miner~$i^*$ with hash rate at most the total hash rate of miners
  in~$T$, and every reward sharing scheme~$\{ \bfr_i \}_{i \in T}$,
  either: (i) no miner of~$T$ has higher expected utility 
  in~$\bfh'$ (with the given reward sharing scheme) than in~$\bfh$; or (ii) some
  miner of~$T$ has strictly lower expected utility 
  in~$\bfh'$ (with the given reward sharing scheme) than in~$\bfh$.  That is, if
\[
x_{i^*}(\bfh') \cdot \expect{U_i(\bfr_i)}  >  U_i(1) \cdot x_i(\bfh)
\]
for some miner~$i \in T$, then 
\[
x_{i^*}(\bfh') \cdot \expect{U_{\ell}(\bfr_{\ell})}  <  U_{\ell}(1) \cdot x_{\ell}(\bfh)
\]
for some other miner~$\ell \in T$.

\item[A4c.] \textbf{Weak collusion-proofness against~$\calU$ (under proportional
    sharing)}: An allocation rule $\bfx$ is {\em weakly collusion-proof
against $\calU$    (under proportional sharing)} if:
  For every configuration
  $\bfh \in \bR$, every choice of utility function $U_i \in \calU$ for
  each participating miner~$i$, every configuration $\bfh'$ that can
  be derived from~$\bfh$
 by replacing a set~$T$ of miners with a new miner~$i^*$
  with hash rate at most the total hash rate of miners in~$T$, some
  miner of~$T$ has expected utility under $\bfh'$ (with proportional sharing)
  at most that in the original configuration.  That is, for some
  miner~$i \in T$,
\[
x_{i^*}(\bfh') \cdot U_i\left(\tfrac{h_i}{\sum_{j \in T} h_j}
  \right)  \le  U_i(1) \cdot x_i(\bfh).
\]

\end{enumerate}

\subsection{Risk Aversion and Impossibility}\label{ss:imposs}

With risk-neutral miners, the proportional allocation rule implemented
by the Bitcoin protocol satisfies all of the axioms in
Section~\ref{ss:axioms}, including collusion-proofness (even with
arbitrary reward sharing).  In reality, however, Bitcoin is {\em not}
collusion-proof, in that most miners join a mining pool that
effectively acts like a large single miner.\footnote{See
  \texttt{https://www.blockchain.com/en/pools} for the biggest Bitcoin
  mining pools.  As of this writing, the largest pool (BTC.com)
  controls roughly 20\% of the total hash rate.}  There is anecdotal
evidence that miners generally join mining pools to lower the variance
of the rewards received, analogous to an insurance policy, and we view
the ubiquity of mining pools as strong evidence that most miners are
risk-averse.  How does risk aversion affect the set of possible
allocation rules satisfying our standard axioms?

The following {\em impossibility result} shows that the incentive to
form mining pools
is not an artifact of the specific allocation rule implemented in
Bitcoin; rather, it is a fundamental difficulty  with risk-averse
miners.

\begin{theorem}[Impossibility with Risk-Averse Miners]\label{theo3}
Assume that all miners are expected utility-maximizers with the same
strictly concave utility function~$U$ satisfying $U(1)>U(0)=0$.
There is no non-zero allocation rule that is symmetric,
weakly budget-balanced,
sybil-proof, and 
weakly
collusion-proof against $\calU=\{U\}$ (with proportional sharing). 
\end{theorem}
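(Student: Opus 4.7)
The plan is to derive a contradiction between sybil-proofness and weak collusion-proofness by focusing on two canonical configurations: the single-miner tuple $(m)$ and the all-ones tuple $1^m$, for some $m \ge 2$. The key intuition is that a set of risk-averse miners always strictly prefers a sure proportional share of a lottery to the lottery itself (Jensen's inequality), which lets a coalition extract gains that the axioms should have ruled out.

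First I would use the non-zero assumption together with sybil-proofness to locate an $m \ge 2$ with $x_1((m)) > 0$. If $x_k(\tilde{\bfh}) > 0$ at some configuration $\tilde{\bfh}$ with total hash rate $m \ge 2$, applying A3 to the ``split'' of $(m)$ into $\tilde{\bfh}$ gives $x_1((m)) \ge x_k(\tilde{\bfh}) > 0$; the edge case $m=1$ is handled by one further application of A3 to the split $(2) \to (1)$. Next I would apply weak collusion-proofness A4c to the coalition of all $m$ unit-hash-rate miners in $1^m$ merging into the single miner of $(m)$. Since each miner's proportional share is $1/m$ and symmetry eliminates the ``for some miner $i$'' quantifier (all $x_i(1^m)$ are equal), the axiom yields
\[
x_1((m)) \cdot U(1/m) \;\le\; U(1) \cdot x_1(1^m),
\]
which in particular forces $x_1(1^m) > 0$. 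In the opposite direction, sybil-proofness applied to the split $(m) \to 1^m$ yields
\[
m \cdot x_1(1^m) \;\le\; x_1((m)).
\]

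Combining these two inequalities and dividing by the positive quantities $x_1(1^m)$ and $U(1/m)$ (positivity of the latter follows from strict concavity together with $U(0)=0$ and $U(1)>0$) gives $m \cdot U(1/m) \le U(1)$. But strict concavity applied to the convex combination $\tfrac{1}{m}\cdot 1 + \tfrac{m-1}{m}\cdot 0$ yields $U(1/m) > \tfrac{1}{m}U(1)$ for every $m \ge 2$, i.e., $m \cdot U(1/m) > U(1)$, the desired contradiction. The argument becomes a short inequality chase once the two configurations are singled out; the main conceptual obstacle is noticing that sybil-proofness and collusion-proofness push the ratio $x_1((m))/x_1(1^m)$ in opposite directions, with a gap governed precisely by the ``risk premium'' $U(1/m) - U(1)/m$ that strict concavity of $U$ forbids from vanishing.
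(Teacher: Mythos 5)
Your proof is correct and rests on essentially the same mechanism as the paper's: sybil-proofness caps the total reward of split miners by that of the merged miner, while strict concavity (via $k\,U(1/k)>U(1)$ for $k\ge 2$) makes re-merging under proportional sharing a strict improvement for everyone in the coalition, contradicting weak collusion-proofness. The only differences are bookkeeping: you normalize to the canonical configurations $(m)$ and $1^m$ with an $m$-way unit split (which also requires your small positivity-transfer step and the $m=1$ edge case), whereas the paper splits a single miner of an arbitrary positive configuration into two halves and exhibits the profitable re-merge directly; your variant has the minor advantage of avoiding the paper's ``assume $h_i$ is even'' simplification.
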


Note that this impossibility result holds even with our weakest notions
of budget-balance~(A2b) and collusion-proofness~(A4c).
The assumption that all miners share the same utility function $U$ only makes
  the impossibility result more compelling.

\begin{proof}[Proof of Theorem \ref{theo3}]
  Let~$\bfx$ be a non-zero and symmetric allocation rule.  Let~$\bfh$
  be a configuration such that $x_i(\bfh) > 0$ for some~$i$, and
  assume for simplicity that $h_i$ is even.  Obtain~$\bfh'$ from
  $\bfh$ by replacing miner~$i$ with two miners $i_1$ and $i_2$ with
  hash rate $h_i/2$ each.  By symmetry,
  $x_{i_1}(\bfh') = x_{i_2}(\bfh')$; let~$q$ denote this common
  probability.  If $\bfx$ is sybil-proof, then $2q \le x_i(\bfh)$;
  suppose this is indeed the case.  Starting now from $\bfh'$, if the two
  miners~$i_1$ and~$i_2$ join forces and combine hash rates to produce
  the configuration $\bfh$ (sharing rewards proportionally), then
  their expected utilities change from
\[
U(1) \cdot q
\]
to
\[
U\left( \tfrac{1}{2}\right) \cdot x_i(\bfh) \ge 2 \cdot U\left(
  \tfrac{1}{2}\right) \cdot q.
\]
Our assumption on~$U$ implies that $2U(1/2) > U(1)$ and hence
both miners are strictly better off in the coalition.  We conclude
that $\bfx$ is not weakly collusion-proof under proportional sharing.
\end{proof}

\subsection{Possibility with Deterministic Rewards}\label{ss:poss}

The impossibility result in Theorem~\ref{theo3} holds for randomized
allocation rules $\bfx$ that always allocate the entire block reward
to a single miner according to the probability distribution specified
by $\bfx(\bfh)$.  At the other extreme are {\em deterministic}
allocation rules, for which each $x_i(\bfh)$ represents a (fractional)
block reward deterministically given to miner~$i$.
With risk-neutral miners, there is no difference between the two types
of rules, and both the randomized and deterministic implementations of
the proportional allocation rule satisfy all of our axioms.  

With a deterministic allocation rule, the miner utility functions no
longer matter---more (deterministic) reward is always better than
less.  (Remember that utility functions are strictly increasing.)
In this case, miners effectively act as if they were risk-neutral,
and so all results for risk-neutral miners carry over to deterministic
allocation rules and miners with arbitrary utility functions.  For
example, the following corollary of Theorem~\ref{theo1} is immediate.
\begin{corollary}[Possibility with Deterministic Rewards]\label{cor:poss}
For every family $\calU$ of utility functions,
the deterministic implementation of the proportional allocation rule
is the unique deterministic allocation rule that
is symmetric, strongly budget-balanced, sybil-proof, and weakly
collusion-proof against $\calU$ (with proportional sharing).
\end{corollary}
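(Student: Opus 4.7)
The plan is to reduce Corollary~\ref{cor:poss} directly to Theorem~\ref{theo1}. The key observation is that in a deterministic allocation rule, each miner receives a non-random fractional share of the block reward, so every preference comparison through a strictly increasing utility function collapses to a comparison of two real numbers. Under this reduction, the deterministic forms of axioms~A3 and~A4c against any family~$\calU$ coincide with the risk-neutral axioms used in Theorem~\ref{theo1}, and the uniqueness claim transfers verbatim.

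Concretely, I would first restate the two incentive axioms in the deterministic setting. For a deterministic rule~$\bfx$, miner~$i$'s utility in configuration~$\bfh$ is $U_i(x_i(\bfh))$, and a single entity controlling a set~$S$ of sybils collects the aggregate share $\sum_{j \in S} x_j(\bfh')$ deterministically. Sybil-proofness therefore demands $U(\sum_{j \in S} x_j(\bfh')) \le U(x_i(\bfh))$, which by strict monotonicity of~$U$ is equivalent to
\[
\sum_{j \in S} x_j(\bfh') \le x_i(\bfh),
\]
i.e., the risk-neutral form of~A3. Similarly, weak collusion-proofness under proportional sharing demands that some $i \in T$ satisfy $U_i(x_{i^*}(\bfh') \cdot h_i / \sum_{j \in T} h_j) \le U_i(x_i(\bfh))$, which is equivalent to
\[
x_{i^*}(\bfh') \cdot \frac{h_i}{\sum_{j \in T} h_j} \le x_i(\bfh),
\]
the risk-neutral form of~A4c.

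With this equivalence in hand, the corollary is immediate. Any deterministic allocation rule that is symmetric, strongly budget-balanced, sybil-proof, and weakly collusion-proof against~$\calU$ also satisfies the risk-neutral versions of these axioms, so Theorem~\ref{theo1} forces it to be the proportional rule. Conversely, the deterministic proportional rule satisfies the risk-neutral axioms and hence, by the same equivalence, their deterministic generalizations for any family~$\calU$. The argument poses no real technical obstacle; the only subtlety is to reinterpret $x_i(\bfh)$ as a deterministic fractional reward rather than a probability of winning the entire reward, and then to use strict monotonicity of the utilities to cancel $U_i$ from both sides of each inequality.
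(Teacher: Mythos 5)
Your proposal is correct and matches the paper's argument: the paper likewise observes that with deterministic rewards, strictly increasing utilities make every comparison collapse to a comparison of reward amounts, so miners effectively act risk-neutrally and the corollary follows immediately from Theorem~\ref{theo1}. Your write-up just makes the monotonicity cancellation explicit for axioms~A3 and~A4c, which the paper leaves as a one-line remark.
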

Bitcoin's implementation of the proportional allocation rule is
inherently randomized.  But a deterministic version of the rule can be
approximated as closely as desired using a protocol of the type
described in Section~\ref{ss:impl} (with a sufficiently large value of
the parameter~$M$).

In the proposed implementation in Section~\ref{ss:impl}, the choice of
the miner who can authorize a new block of transactions and add it to
the blockchain---the first miner to find a full solution---is still
effectively chosen randomly (see step~(3)).
This highlights the fact that there are really two distinct problems
to solve in each block creation epoch:
\begin{enumerate}

\item Electing a leader to authorize the next block of transactions.

\item Distributing block rewards. 

\end{enumerate}
In Bitcoin, the solutions to these problems are tightly coupled, in
that the choices of the leader and of the recipient of the block reward
are always the same.  The solutions are decoupled in
the deterministic implementation of the
proportional allocation, with the leader elected randomly as in
Bitcoin but with rewards distributed deterministically.

\subsection{The Case of Risk-Seeking Miners}\label{ss:seek}

For completeness, this section studies {\em risk-seeking} miners,
meaning expected utility maximizers with strictly convex utility
functions.  In a surprising contrast with the impossibility result for
risk-averse miners (Theorem~\ref{theo3}), 
here the (randomized implementation of the) proportional
allocation rule satisfies all of the same axioms as in the case of
risk-neutral miners.
  
\begin{theorem}[The Proportional Rule with Risk-Seeking Miners]\label{theo7}
Let $\calU$ be the set of all convex functions $U:[0,1]\rightarrow
\mathbb{R}_{\ge 0}$ with $U(1)>U(0)=0$.
The proportional allocation rule is symmetric, strongly
budget-balanced, sybil-proof, and collusion-proof against  $\calU$
under arbitrary reward sharing.
\end{theorem}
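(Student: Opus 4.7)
The plan is to verify the four axioms in turn. Symmetry, strong budget-balance, and sybil-proofness are each immediate or reduce to the risk-neutral case: the formula $x_i(\bfh)=h_i/\sum_j h_j$ gives the first two directly, and for sybil-proofness the A3 condition $\sum_{j\in S}x_j(\bfh')\cdot U(1)\le x_i(\bfh)\cdot U(1)$ collapses, since $U(1)>0$, to the inequality already verified for the proportional rule in Example~\ref{ex:1} (the $c\equiv 1$ specialization of the computation carried out for generalized proportional rules in Example~\ref{ex:genprop}).

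The substantive part is collusion-proofness against $\calU$. The ingredient I would invoke is a chord bound: every convex $U\in\calU$ with $U(0)=0$ satisfies $U(r)\le r\,U(1)$ on $[0,1]$, so $\expect{U(\bfr)}\le U(1)\cdot\expect{\bfr}$ for any $[0,1]$-valued random variable $\bfr$. Fixing a coalition $T$ with total hash rate $H_T$, coalition miner $i^*$ with $h'_{i^*}\le H_T$, utilities $\{U_i\}_{i\in T}\subseteq\calU$, and a reward-sharing scheme $\{\bfr_i\}_{i\in T}$ with $\sum_{i\in T}\bfr_i\le 1$ almost surely, I would write $H=\sum_j h_j$ and $H'=H-H_T+h'_{i^*}$ and argue by contradiction, assuming a strict Pareto improvement for the coalition. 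Weak Pareto gives $\tfrac{h'_{i^*}}{H'}\cdot\expect{U_i(\bfr_i)}\ge U_i(1)\cdot\tfrac{h_i}{H}$ for every $i\in T$; the chord bound then yields $\expect{\bfr_i}\ge\tfrac{h_i}{H}\cdot\tfrac{H'}{h'_{i^*}}$, and summing over $T$ while using $\sum_{i\in T}\expect{\bfr_i}\le 1$ produces $h'_{i^*}\cdot H\ge H_T\cdot H'$.

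The finishing move is the algebraic identity $H_T\cdot H'-h'_{i^*}\cdot H=(H-H_T)(H_T-h'_{i^*})\ge 0$, which (using $h'_{i^*}\le H_T\le H$) gives the reverse inequality and thus forces equality throughout. Tracing equalities back through the chord bound, each coalition miner's expected utility $\tfrac{h'_{i^*}}{H'}\expect{U_i(\bfr_i)}$ equals $U_i(1)\cdot x_i(\bfh)$ exactly, contradicting the strict-improvement assumption. The one pitfall to beware of is that $\calU$ contains only convex (not strictly convex) functions, so the argument cannot extract information about the shapes of the distributions $\bfr_i$ beyond their means---but nothing more is needed here, since the squeeze already pins down every utility inequality as an equality.
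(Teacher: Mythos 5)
Your proof is correct and follows essentially the same route as the paper's: the heart of both arguments is the chord bound $U(r)\le r\,U(1)$ for convex $U$ with $U(0)=0$, combined with summing the coalition's Pareto inequalities and the constraint $\sum_{i\in T}\expect{\bfr_i}\le 1$ to reach a contradiction. The only (minor) difference is bookkeeping: you handle $h'_{i^*}<\sum_{j\in T}h_j$ explicitly via the identity $H_T H'-h'_{i^*}H=(H-H_T)(H_T-h'_{i^*})\ge 0$ and then trace equalities back, whereas the paper implicitly reduces to the coalition's best case $h'_{i^*}=\sum_{j\in T}h_j$ and contradicts $\expect{\sum_{i\in T}U_i(\bfr_i)/U_i(1)}>1$ directly against the bound $\le 1$.
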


\begin{proof}
The proportional allocation rule is obviously symmetric and strongly
budget-balanced.  It is sybil-proof for risk-neutral miners and hence
also non-risk-neutral miners (see the discussion in
Section~\ref{ss:axioms2}). 

To complete the proof, assume for the purposes of contradiction that
the proportional allocation rule is not collusion-proof under
arbitrary transfers against the set $\calU$ of convex utility
functions~$U$ that satisfy $U(1)>U(0)=0$.
Then, there is a configuration
$\bfh\in \bN^n$, utility functions $U_1,U_2,\ldots,U_n \in \calU$
for the participating miners,
a subset $T\subseteq [n]$ of the miners, and a reward-sharing scheme
$\{\bfr_i \}_{i\in T}$ such that
\[
\frac{\sum_{j \in T} h_j}{\sum_{j \in [n]} h_j}  \cdot
\expect{U_i(\br_i)} \ge \frac{h_i}{\sum_{j \in [n]} h_j} \cdot U_i(1)\quad\text{for every $i\in T$},
\]
with the inequality strict for at least one miner $i\in T$.

Summing these inequalities over~$i \in T$ and using that $U_i(1)>0$
for every miner~$i$, we then have
\[
\frac{\sum_{j \in T} h_j}{\sum_{j \in [n]} h_j} \cdot \sum_{i\in T}
\expect{\frac{U_i(\bfr_i)}{U_i(1)}} >\sum_{i\in T} \frac{h_i}{\sum_{j
    \in [n]} h_j}
\]
and, thus,
\[
\expect{\sum_{i\in T}\frac{U_i(\bfr_i)}{U_i(1)}}>1.
\]
However, since each $U_i$ is convex with $U_i(1)>U_i(0)=0$, we have
  $U_i(\br_i)\le \br_i\cdot U_i(1)$ (with probability~1) and thus, since $\sum_{i \in T}
  \bfr_i \le 1$ (again with probability~1),
\[
\expect{\sum_{i\in T}\frac{U_i(\bfr_i)}{U_i(1)}} \le \expect{\sum_{i\in T}\bfr_i}\le 1.
\]
This completes the contradiction and the proof.
\end{proof}

\section{Acknowledgments}

All three authors were supported in part by the Columbia-IBM Center
for Blockchain and Data Transparency.  The first author was also supported in part by 
  NSF IIS-1838154 and NSF CCF-1703925. 
The second author was supported in part by NSF CCF-1763970.
The third author was 
supported in part by NSF Award CCF-1813188 and ARO grant W911NF1910294.

\begin{flushleft}
\bibliography{axiomatic}
\bibliographystyle{acm}
\end{flushleft}

\end{document}